\documentclass[a4paper,UKenglish,autoref]{lipics-v2019}

\usepackage[ruled,vlined,linesnumbered,commentsnumbered]{algorithm2e}
\usepackage{xcolor}
\usepackage{hyperref}

\nolinenumbers
\hideLIPIcs

\title             {Parameterized algorithms for node connectivity augmentation problems}
\titlerunning{Parameterized algorithms for node connectivity augmentation problems}

\author{Zeev Nutov}{The Open University of Israel}{nutov@openu.ac.il}
{https://orcid.org/0000-0002-6629-3243}{}
\authorrunning{Zeev Nutov}
\Copyright{Zeev Nutov}

\ccsdesc[100]{Theory of computation~Design and analysis of algorithms}




\EventEditors{}
\EventNoEds{2}
\EventLongTitle{}
\EventShortTitle{}
\EventAcronym{}
\EventYear{2020}
\EventDate{}
\EventLocation{}
\EventLogo{}
\SeriesVolume{}
\ArticleNo{}

\begin{document}

\maketitle



\def\AA  {{\cal A}}
\def\BB  {{\cal B}}
\def\CC  {{\cal C}}
\def\FF  {{\cal F}}
\def\TT  {{\cal T}}

\def\p {\partial}
\def\A {\mathbb{A}}
\def\B {\mathbb{B}}
\def\C {\mathbb{C}}

\def\de    {\delta}

\def\empt {\emptyset}
\def\sem  {\setminus}
\def\subs {\subseteq}

\def\f  {\frac}
\def\h {\hat}


\def\kOCA {{\sc $k$-OCA}}    
\def\kCA    {{\sc $k$-CA}}       
\def\ENCA {{\sc $(2,k)$-CA}} 
\def\SSCDS  {{\sc SS-CDS}}  

\def\CA     {{\sc CA}}    
\def\OCA  {{\sc OCA}} 

\def\IBFC    {{\sc Intersecting Biset Family Cover}}  
\def\CFC    {{\sc Crossing Family Cover}}  





\keywords{node connectivity augmentation, fixed parameter tractability} 

\begin{abstract}
A graph $G$ is $k$-out-connected from its node $s$ if it contains $k$ internally disjoint $sv$-paths to every node $v$;
$G$ is $k$-connected if it is $k$-out-connected from every node. 
In connectivity augmentation problems the goal is to augment a graph $G_0=(V,E_0)$ 
by a minimum costs edge set $J$ such that $G_0 \cup J$ has higher connectivity than $G_0$. 
In the {\sc  $k$-Out-Connectivity Augmentation} ({\kOCA}) problem, 
$G_0$ is $(k-1)$-out-connected from $s$ and $G_0 \cup J$ should be $k$-out-connected from $s$; 
in the {\sc $k$-Connectivity Augmentation} ({\kCA}) problem 
$G_0$ is $(k-1)$-connected and $G_0 \cup J$ should be $k$-connected.
The parameterized complexity status of these problems was open even for $k=3$ and unit costs. 
We will show that {\kOCA} and $3$-{\CA} can be solved in time $9^p \cdot n^{O(1)}$,
where $p$ is the size of an optimal solution.
Our paper is the first that shows fixed parameter tractability 
of a {\em $k$-node-connectivity} augmentation problem with {\em high values of $k$}.  
We will also consider the $(2,k)$-{\sc Connectivity Augmentation} ({\ENCA}) problem 
where $G_0$ is $(k-1)$-edge-connected 
and $G_0 \cup J$ should be both $k$-edge-connected and $2$-connected. 
We will show that this problem can be solved in time $9^p \cdot n^{O(1)}$,
and for unit costs approximated within $1.892$. 
\end{abstract}

\section{Introduction} \label{s:intro}

In network design problems the goal is to find a minimum cost subgraph 
that satisfies given connectivity requirement.
Most of these problems are NP-hard, hence parameterized and approximation algorithms are of interest.
A natural question then is whether the problem is {\bf fixed parameter tractable} 
w.r.t a parameter $p$, namely, if it can be solved in time $f(p) \cdot N^{O(1)}$, where $N$ is the input size;
a related question is what approximation ratio can be achieved within this time bound.
One of the most studied problems is the {\sc Steiner Tree} problem,
where we seek a minimum cost subtree that spans a given set of terminals. 
Already in the 70's, Dreyfus and Wagner \cite{DW} showed that this problem can be solved in time $3^q \cdot n^{O(1)}$,
where $q$ is the number of terminals and $n=|V|$ is the number of nodes in the graph; 
for improvements over this running time see \cite{FKM}. 
The Dreyfus-Wagner algorithm extends to the {\sc Directed Steiner Tree} problem, 
in which the goal is to find a minimum cost directed tree that contains a path from a root node $s$ to every terminal. 

Graphs in this paper are assumed to be undirected and may have parallel edges, unless stated otherwise. 
While there was a large progress in the study of parameterized complexity 
of edge-connectivity problems \cite{GU,MV,BFGM,AMPS, FML},
many papers mention that very little is known about their much harder node-connectivity  counterparts.
We will consider the ``simplest'' type of node-connectivity problems, that however have a rich history, 
when the goal is to increase the node connectivity from $k-1$ to $k$ 
from a given node to other nodes, or between all nodes. 
A graph $G=(V,E)$ is {\bf $k$-out-connected from $s$} 
if it contains $k$ internally disjoint $sv$-paths for every $v \in V \sem \{s\}$,
and $G$ is {\bf $k$-connected} if it is $k$-out-connected from every node and $|V| \geq k+1$. 
In the {\sc $k$-Out-Connected Subgraph} problem the goal is to find a minimum cost $k$-outconnected from $s$ spanning subgraph, 
while in the {\sc $k$-Connected Subgraph} problem the spanning subgraph should be $k$-connected. 
These two problems are trivially fixed parameter tractable w.r.t. an optimal solution size, 
since any feasible solution has at least $kn/2$ edges. 
Therefore, it is reasonable to choose as a parameter the number of {\em non-zero} cost 
edges in an optimal solution, a number that may be between $1$ and $\Theta(kn)$. 
This leads to the augmentation versions of these problems, where the goal is to augment a graph $G_0=(V,E_0)$ (of cost zero)
by a minimum costs edge set $J$ such that $G_0 \cup J$ has larger connectivity than $G_0$. 
In this work, we will consider the problem of increasing the connectivity only by $1$. 
Formally, these augmentation problems are as follows. 

\begin{center} \fbox{\begin{minipage}{0.98\textwidth}
\underline{\sc  $k$-Out-Connectivity Augmentation} ({\kOCA}) \\
{\em Input:} \ \ A $(k-1)$-out-connected from $s$ graph $G_0=(V,E_0)$ and an edge set $E$ with costs $\{c_e:e \in E\}$. \\
{\em Output:} A  minimum cost edge set $J \subs E$ such that $G_0 \cup J$ is $k$-out-connected from $s$. 
\end{minipage}} \end{center}

\begin{center} \fbox{\begin{minipage}{0.98\textwidth}
\underline{\sc  $k$-Connectivity Augmentation} ({\kCA}) \\
{\em Input:} \ \ A $(k-1)$-connected graph $G_0=(V,E_0)$ and an edge set $E$ with costs $\{c_e:e \in E\}$. \\
{\em Output:} A  minimum cost edge set $J \subs E$ such that $G_0 \cup J$ is $k$-connected. 
\end{minipage}} \end{center}
 
These are the optimization versions of these problems.
In the decision versions, we are also given a parameter $p$, and ask whether 
there exists a feasible solution $J \subs E$ of size $|J| \leq p$ and cost $c(J) \leq opt$, 
where $opt$ is the optimal solution cost. 
 
Let us briefly review the parameterized and approximation status of these problems.
The directed version of {\sc $k$-Out-Connected Subgraph} admits a polynomial time algorithm \cite{FT}, 
and this implies approximation ratio $2$ for the undirected version.
One the other hand, the approximability status of {\sc $k$-Connected Subgraph} is somewhat complicated;
the problem admits ratio $\lceil \f{k+1}{2}\rceil$ for $2 \leq k \leq 7$ \cite{KhR,ADNP,DN,KN-kcs}, 
$4+\epsilon$ for any constant $k$ and $\epsilon>0$ \cite{N-kcs4,CV}, 
and $O\left(\log k \log \f{n}{n-k}\right)$ for any $k$ \cite{N-comb}. 
The augmentation version {\kCA} admits better approximation ratios for $k \geq 8$: 
ratio $4$ for $n \geq 3k-5$ and  $O\left(\log \f{n}{n-k}\right)$ for any $k$ \cite{N-CSR, N-comb}.
When $(V,E)$ is a complete graph with unit costs (so any edge can be added by a cost of $1$),  
the problem of augmenting an arbitrary graph $G_0$ to be $k$-connected 
can be solved in time $f(k) \cdot n^{O(1)}$ \cite{JJ},
and for a $(k-1)$-connected $G_0$ admits a polynomial time algorithm \cite{V}.
See also a survey in \cite{N-kcs}.

Let {\sc $k$-Edge-Connectivity Augmentation} be the  edge-connectivity version of {\kCA};
note that it is equivalent to the edge connectivity version of $k$-{\OCA}. 
By the cactus model of Dinitz, Karzanov, and Lomonosov \cite{DKL}, 
for  {\sc $k$-Edge-Connectivity Augmentation}, the case of $k-1$ odd and even is equivalent to the case 
$k-1=1$ (so called {\sc Tree Augmentation} problem) and 
$k-1=2$ (so called {\sc Cactus Augmentation} problem), respectively. 
For approximation algorithms for these two problems see recent papers \cite{CTZ,TZ}.

We will consider one of the most common choices of the parameter $p$ -- an optimal solution size.
Nagamochi \cite{Naga} showed that the problem of augmenting a tree by a min-size edge set 
can be solved in time $2^{O(p \log p)} \cdot n^{O(1)}$. 
Guo and Uhlmann \cite{GU} considered both edge an node connectivity versions of $2$-{CA} with unit costs, 
and showed that they have a kernel of size $O(p^2)$.
Marx and V\'{e}gh  \cite{MV} showed that {\sc $k$-Edge-Connectivity Augmentation} with arbitrary costs can be solved in time $2^{O(p \log p)} \cdot n^{O(1)}$,
and that the problem of increasing the edge connectivity from $0$ to $2$ can also be solved within this time bound.
Basavaraju et al. \cite{BFGM} used a novel reduction to the {\sc Node Weighted Steiner Tree} problem to improve the running time
for {\sc $k$-Edge-Connectivity Augmentation} to $9^p \cdot n^{O(1)}$
(see also \cite{N-TAP} for a very simple algorithm for {\sc Tree Augmentation} with running time $16^p \cdot n^{O(1)}$).
The reduction of  Basavaraju et al. \cite{BFGM} was generalized in \cite{N-2CND} to the problem of covering a crossing set family by edges 
(see definitions below), and also was extended to $2$-{\CA}; 
specifically, the result of \cite{N-2CND} implies that $2$-{\CA} and $2$-{\OCA} 
can also be solved in time $9^p \cdot n^{O(1)}$.
However, the parameterized complexity status of both $3$-{\OCA} and $3$-{\CA} was open even for unit costs \cite{MV,GRRW}.

We will show that the reductions of  \cite{BFGM,N-2CND} can be extended to 
$k$-node-connectivity augmentation problems with {\em high values of $k$}.  
But our reduction is not to the {\sc Node Weighted Steiner Tree} problem, 
but to the {\sc Node Weighted} {\sc Group Steiner Tree} problem.
The later can be reduced to the {\sc Directed Steiner Tree} problem, that can be solved in time $3^q \cdot n^{O(1)}$,
where $q$ is the number of terminals.
Formally, we will prove the following.

\begin{theorem} \label{t:1}
{\kOCA} can be solved in time $9^p \cdot n^{O(1)}$. 
\end{theorem}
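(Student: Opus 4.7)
The plan is to follow the template of \cite{BFGM,N-2CND}, with the key twist that the reduction target will be {\sc Node Weighted Group Steiner Tree} (NWGST) rather than the simpler {\sc Node Weighted Steiner Tree}. The first step is to reformulate {\kOCA} via Menger's theorem as the problem of covering a family $\FF$ of ``tight'' bisets: those bisets $(S,S^+)$ with $s \in S$, $V \sem S^+ \neq \empt$, and $|S^+ \sem S| = k-1$. Such a biset must have at least one edge of $J$ going from $S$ to $V \sem S^+$. Standard submodularity arguments show that $\FF$ is an intersecting biset family, so the combinatorial machinery developed in \cite{N-2CND} for intersecting biset families applies.

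Next, I would build a tree-like structure on the ``atoms'' of $\FF$ (obtained by intersecting maximal tight bisets), mirroring the laminar construction of previous work. Covering the tight bisets via $E$ then translates into selecting edges/endpoints in this tree. The new phenomenon relative to \cite{BFGM,N-2CND} is that an edge $uv \in E$ may cover a biset in several different ways (through $u$, through $v$, or through an atom that contains either endpoint), so the associated Steiner problem has ``groups'' of equivalent options and the correct abstraction is NWGST rather than NWST. Once the NWGST instance is built, I would invoke the standard reduction from NWGST to {\sc Directed Steiner Tree} (DST): weighted nodes are split into directed edges of that cost, and each group is replaced by a virtual sink terminal fed by every member of the group. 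DST with $q$ terminals admits the Dreyfus--Wagner $3^q \cdot n^{O(1)}$ algorithm.

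The running-time bound would follow by counting: any feasible solution of size $p$ hits at most $2p$ relevant atoms in the tree structure (one per endpoint of each new edge), so the NWGST instance needs at most $q \le 2p$ groups. Combined with the DST bound, this gives $3^{q} \cdot n^{O(1)} \le 9^p \cdot n^{O(1)}$. The cost equivalence is immediate from the reduction: an NWGST tree of cost $c$ decodes to an edge set $J \subs E$ of cost $\le c$ covering $\FF$, and conversely any feasible $J$ yields an NWGST solution of cost $c(J)$.

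The main obstacle I expect is in Step 2, namely, formalizing the Group Steiner Tree reduction for arbitrary $k$. For $k=2$ the tight bisets are essentially laminar and each biset maps to a single Steiner terminal, which is why the reductions in \cite{N-2CND,BFGM} land in (ordinary) Node Weighted Steiner Tree. For $k \geq 3$, two tight bisets can share several boundary nodes, and a single edge $uv$ can simultaneously cover many bisets via different atoms, so one has to define the groups carefully and verify in both directions that valid NWGST solutions correspond exactly to valid edge covers of $\FF$. Once this equivalence is pinned down together with the $q \leq 2p$ count, Theorem~\ref{t:1} follows immediately.
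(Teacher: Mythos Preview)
Your high-level strategy --- recast {\kOCA} as covering an intersecting biset family $\FF$, reduce to {\sc Node Weighted Group Steiner Tree}, then to {\sc Directed Steiner Tree}, and bound the number of terminals by $2p$ --- is exactly the paper's plan, and your running-time arithmetic is correct. Two points need correction, one minor and one substantive.

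\medskip

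\textbf{Minor.} Your description of the tight bisets is off. A tight biset is a proper $\A$ with $s \in A^*$ (not $s \in A$) and $|\p\A| + d_{G_0}(\A) = k-1$, not $|\p\A| = k-1$; the degree term is essential, since otherwise you only capture the ``pure node cuts'' and miss mixed cuts.

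\medskip

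\textbf{Substantive.} The gap is precisely your Step~2, and the paper's contribution is a clean construction that replaces your ``atoms / tree-like structure'' idea, which does not obviously go through for intersecting biset families (there is no cactus/laminar representation here). The paper instead builds a \emph{separability graph} $H$ directly: its node set is $\CC \cup E \cup \{s\}$, where $\CC$ is the family of $\FF$-\emph{cores} (inclusion-minimal members of $\FF$, not atoms of maximal ones), and two nodes $x,y$ are adjacent iff no $\A \in \FF$ ``separates'' $x$ from $y$ (with a carefully chosen asymmetric definition: cores are required to sit inside $\A$, edges on the $x$-side inside $A^+$, edges on the $y$-side inside $V\setminus A$). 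The key lemma is that $J \subs E$ covers all bisets in $\FF$ containing a given core $\C$ iff $H$ restricted to $\{\C\} \cup J \cup \{s\}$ contains a $\C s$-path; the proof is a short uncrossing argument using that $\FF$ is intersecting. This makes $J$ a cover of $\FF$ iff $J \cup \{s\}$ is connected in $H$ and dominates $\CC$, i.e.\ exactly a {\sc Rooted SSCDS} (hence NWGST) solution with terminal set $R = \CC$. The bound $|\CC| \le 2p$ then follows because distinct cores have disjoint inner parts (intersecting family + minimality), so any edge covers at most two cores. No tree decomposition of $\FF$ is needed; the separability graph does all the work.
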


Algorithms for {\kOCA} sometimes used as subroutines in algorithm for {\kCA}. 
Auletta et al. \cite{ADNP} showed that $3$-{\CA} can be reduced in polynomial time to $3$-{\OCA}. 
For $k=4,5$, \cite{DN-kcs} shows that if {\kOCA} admits approximation ratio $\rho$ then $k$-{\CA} admits ratio $\rho+1$. 
Thus from Theorem~\ref{t:1} we get the following.

\begin{corollary} \label{c:1}
In time $9^p \cdot n^{O(1)}$, $3$-{\CA} can be solved optimally, 
and $4$-{\CA} and $5$-{\CA} can be $2$-approximated.
\end{corollary}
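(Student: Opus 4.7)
The plan is to chain Theorem~\ref{t:1} with two reductions that are already in the literature, so the proof is essentially a bookkeeping exercise on the parameter.

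First I would recall the two ingredients. The reduction of Auletta et al.~\cite{ADNP} converts, in polynomial time, an instance $(G_0,E,c)$ of $3$-{\CA} into an instance of $3$-{\OCA} (typically by choosing an appropriate node $s$, often obtained by ``splitting'' a node of minimum degree or by exploiting the structure of a minimum $3$-separator of $G_0$). The key property I would verify is that the $3$-{\OCA} instance has the same ground set $E$ and costs, and that every feasible $3$-{\CA} solution $J$ remains feasible for the derived $3$-{\OCA} instance, so that the $3$-{\OCA} optimum is at most the $3$-{\CA} optimum $p$. The second ingredient, from~\cite{DN-kcs}, says that any $\rho$-approximation algorithm for $k$-{\OCA} ($k=4,5$) yields a $(\rho+1)$-approximation for $k$-{\CA}; again I would verify that the $k$-{\OCA} subinstances produced by this reduction have optima bounded by the $k$-{\CA} optimum $p$.

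For $3$-{\CA}, my proof would be: apply the reduction of~\cite{ADNP} to get a $3$-{\OCA} instance whose optimum $p'$ satisfies $p' \le p$, solve it exactly via Theorem~\ref{t:1} in time $9^{p'}\cdot n^{O(1)} \le 9^p\cdot n^{O(1)}$, and lift the solution back to $3$-{\CA}. Since both the reduction and the lifting are polynomial and optimum-preserving, this yields an exact algorithm for $3$-{\CA} within the claimed time bound.

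For $k\in\{4,5\}$, my proof would instantiate the reduction of~\cite{DN-kcs} with the exact (i.e., $\rho=1$) algorithm for $k$-{\OCA} given by Theorem~\ref{t:1}, immediately yielding an approximation ratio of $\rho+1=2$. Again the $k$-{\OCA} calls are on instances whose optimum is at most $p$, so the total running time is $9^{p}\cdot n^{O(1)}$.

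The only genuine obstacle I see is the parameter-preservation check for the two reductions; once those are confirmed (they should essentially follow from the fact that both reductions keep the same edge set $E$ and never ``charge'' extra edges to the solution), the corollary follows directly by composing the reductions with Theorem~\ref{t:1}.
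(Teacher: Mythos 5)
Your proposal matches the paper's argument exactly: the corollary is obtained by composing Theorem~\ref{t:1} with the polynomial-time reduction of Auletta et al.~\cite{ADNP} from $3$-{\CA} to $3$-{\OCA}, and with the result of \cite{DN-kcs} that a $\rho$-approximation for {\kOCA} yields a $(\rho+1)$-approximation for $k$-{\CA} when $k=4,5$. Your additional remark about checking that the reductions preserve the parameter (so the {\kOCA} subinstances have optimum at most $p$) is a sensible point that the paper leaves implicit, but it does not change the route.
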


Corollary~\ref{c:1} resolves the open question of parameterized complexity of $3$-{\CA} \cite{MV,GRRW}, 
by establishing that $3$-{\CA} is fixed parameter tractable.

To prove Theorem~\ref{t:1} we will consider a more general problem of covering a biset family by edges.  
A {\bf biset} is an ordered pair $\A=(A,A^+)$ such that $A \subs A^+$;
$\p\A=A^+ \sem A$ is the {\bf boundary} of $\A$, 
$A^*=V \sem A^+$ is the {\bf co-set} of $\A$, and 
$\A^*=(V \sem A^+, V \sem A)$ is the {\bf co-biset} of $\A$.
A biset $\A$ is {\bf proper} if $A \neq \empt$ and $A^* \neq \empt$.  
The intersection and the union of bisets $\A,\B$ are defined by 
$\A \cap \B=(A \cap B,A^+ \cap B^+)$ and $\A \cup \B=(A \cup B,A^+ \cup B^+)$.
Two bisets $\A,\B$ {\bf intersect} if $A \cap B \neq \empt$, and {\bf cross} if also $A^+ \cup B^+ \neq V$.
A biset family $\FF$ is {\bf intersecting/crossing} if $\A \cap \B, \A \cup \B \in \FF$ whenever $\A,\B \in \FF$ intersect/cross.  
An {\bf edge $e$ covers $\A$} if $e$ goes from $A$ to $A^*$, and an edge set $J$ covers $\FF$ 
if every $\A \in \FF$ is covered by some edge in $J$. 
We will identify a set $A$ with the biset $(A,A)$, 
so when $A=A^+$ for every $\A \in \FF$ we have an ordinary set family, 
and we use for set families a similar terminology. 

Let $G_0=(V,E_0)$ be $(k-1)$-out-connected from $s$.
Let us say that a proper biset $\A$ on $V$ is {\bf tight} if  $A \neq \empt$, $s \in A^*$ and $|\p\A|+d_{G_0}(\A)=k-1$, 
where $d_{G_0}(\A)$ is the number of edges in $G_0$ that cover $\A$. 
For the family $\TT$ of tight bisets the following is known, see \cite{F}:
\begin{itemize}
\item
$G_0 \cup J$ is $k$-connected iff $J$ covers $\TT$ (by Menger's Theorem).
\item
$\TT$ is an intersecting biset family. 
\end{itemize}
Thus the following problem includes $k$-{\OCA}.

\begin{center} \fbox{\begin{minipage}{0.98\textwidth}
\underline{\IBFC} \\
{\em Input:} \ \ A graph $G=(V,E)$ with costs $\{c_e:e \in E\}$, an intersecting biset family $\FF$ on $V$. \\
{\em Output:}  A  minimum cost edge set $J \subs E$ that covers $\FF$.
\end{minipage}} \end{center}
In this problem, $\FF$ may not be given explicitly ($|\FF|$ may be exponential in $|V|$);
instead, we require that some queries on $\FF$ can be answered in polynomial time. 
We will assume that for any edge set $I$ we can find in time polynomial in $n=|V|$
the family $\FF^I_{\min}$ and $\FF^I_{\max}$ of all inclusion minimal and maximal bisets, respectively,  
among the bisetsin $\FF$ not covered by~$I$. 
In {\IBFC} instances that arise from the {\kOCA} problem, this can be done using $n$ max-flow computation.
Under this assumption we will prove: 

\begin{theorem} \label{t:2}
{\IBFC} can be solved in time $9^p \cdot n^{O(1)}$.
\end{theorem}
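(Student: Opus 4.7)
\medskip

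\noindent\textbf{Proof plan.} The plan is to generalize the reduction of Basavaraju et al.~\cite{BFGM} (and of \cite{N-2CND} for crossing set families) from \textsc{Node Weighted Steiner Tree} to \textsc{Node Weighted Group Steiner Tree} (NWGST), to accommodate the biset boundary $\p\A$. The final target is \textsc{Directed Steiner Tree}, which admits a $3^q\cdot n^{O(1)}$ algorithm in the number $q$ of terminals; our aim is a reduction in which $q\le 2p$, so that the total running time becomes $3^{2p}\cdot n^{O(1)}=9^p\cdot n^{O(1)}$.

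\smallskip

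\noindent\textbf{Step 1 (Structure of minimal bisets).} I first exploit that $\FF$ is intersecting: if two inclusion-minimal bisets $\A,\B\in\FF$ had $A\cap B\neq\empt$, then $\A\cap\B\in\FF$ would be strictly smaller than both, contradicting minimality. Hence the inner parts of the inclusion-minimal bisets in $\FF$ are pairwise disjoint, giving a laminar-like skeleton on which to build the reduction. Under the oracle assumption on $\FF^I_{\min}$, this family can be enumerated in polynomial time.

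\smallskip

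\noindent\textbf{Step 2 (Reduction to NWGST).} I build an auxiliary graph $H$ whose node set consists of $V$ (with zero weight) together with one node $v_e$ for each $e\in E$ (with weight $c_e$); each $v_e$ is connected to both endpoints of $e$. For each biset $\A$ arising from the structural analysis, I create a group containing the nodes of $A$ and, separately, a group containing the nodes of $A^*$ (so that a covering edge $v_e$ must hit both). The boundary $\p\A$ is the reason a plain Steiner-Tree target does not suffice: edges with an endpoint in $\p\A$ are neutral with respect to $\A$, so the reduction must allow flexible choice of a hitting vertex inside $A$ and $A^*$, which is exactly what the \emph{group} formulation provides. I verify that an edge set $J\subs E$ covers $\FF$ iff the corresponding set $\{v_e:e\in J\}$ induces a subgraph of $H$ hitting every constructed group.

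\smallskip

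\noindent\textbf{Step 3 (From NWGST to Directed Steiner Tree and parameter bound).} NWGST reduces to Directed Steiner Tree by the standard node-splitting transformation with a dummy terminal per group, so $q$ equals the number of groups. To get $q\le 2p$, I would argue—using the laminar skeleton of Step 1 and branching/replacement arguments analogous to \cite{BFGM,N-2CND}—that an optimal cover of size $p$ can be realized by a solution whose image in $H$ is a subtree of total weight $c(J)$ with at most $2p$ group-terminals; each edge $e$ of the optimal cover contributes essentially two endpoints, one on the $A$-side and one on the $A^*$-side of the bisets it covers. Plugging $q\le 2p$ into Dreyfus--Wagner / \cite{DW,FKM} for \textsc{Directed Steiner Tree} yields the claimed $9^p\cdot n^{O(1)}$ bound.

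\smallskip

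\noindent\textbf{Main obstacle.} I expect Step 3 to be the hardest part: bounding the number of groups (terminals) by $2p$, rather than by the size of $\FF^{\empt}_{\min}$ which can be polynomial or larger. This requires a delicate argument that uses both the intersecting structure of $\FF$ and optimality of the target solution to restrict to an $O(p)$-sized ``essential'' subfamily, while simultaneously handling the boundary $\p\A$—the very feature that distinguishes the biset setting from the crossing set-family setting treated in previous work and that forces the move from Steiner Tree to Group Steiner Tree.
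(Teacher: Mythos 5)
Your high-level target (reduce to a group/directed Steiner problem solvable in $3^q\cdot n^{O(1)}$ with $q\le 2p+O(1)$) matches the paper, and your Step 1 observation that the inner parts of inclusion-minimal bisets are pairwise disjoint is correct and is used in the paper. But there is a genuine gap in Step 2: your auxiliary graph is the wrong graph, and the equivalence you claim there is false. Your $H$ is essentially the input graph $(V,E)$ with each edge subdivided by a weighted node $v_e$, and you add groups $A$ and $A^*$ per biset. First, if the groups range over all of $\FF$ (or even over all maximal/minimal bisets at every stage), $q$ is not bounded by $O(p)$; if they range only over the cores, then hitting every core's group does not force coverage of the non-minimal members of $\FF$. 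Second, and more fundamentally, connectivity of $\{v_e:e\in J\}$ in the subdivided graph has nothing to do with covering $\FF$: a connected subtree meeting both $A$ and $A^*$ need not contain a single edge going from $A$ to $A^*$ (the connecting path can slip through $\p\A$ or go around outside $A^+$), and conversely an optimal cover $J$ of $\FF$ need not induce a connected subgraph of $(V,E)$ at all. The paper's central idea, which your proposal is missing, is to define a completely different graph --- the \emph{separability graph} on $\CC\cup J\cup\{s\}$, where two elements are adjacent iff no biset of $\FF$ ``separates'' them (Definition~\ref{d:separates}) --- and to prove (Lemma~\ref{l:ts}, using intersecting-family uncrossing along a BFS ordering of a connected component) that $J$ covers $\FF$ iff every core is connected to $s$ in that graph. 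That lemma is the real technical content of the theorem, and nothing in your plan substitutes for it.

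You also misplace the difficulty. Your ``main obstacle'' --- bounding the number of terminals by $2p$ --- is in fact immediate from your own Step 1: in the correct reduction the terminals are exactly the $\FF$-cores (plus the root), the cores have pairwise disjoint inner parts, each core requires a covering edge with an endpoint in its inner part, so each edge serves at most two cores and $|\CC_\FF|\le 2\,{\sf opt}\le 2p$, giving $3^{|\CC_\FF|+1}=O(9^p)$. No ``branching/replacement'' argument restricting to an essential subfamily is needed. Finally, note that the resulting {\SSCDS}/group-Steiner instance must also be constructible in polynomial time; the paper does this by showing that $\FF$-inseparability of each pair can be tested via the $\FF^I_{\min}/\FF^I_{\max}$ oracles, another step absent from your plan.
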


A node $v$ is a {\bf cut-node} of a graph $G$ if $G \sem \{v\}$ is disconnected. 
In addition, we will consider the following problem, that combines both edge and node 
connectivity augmentation:

\begin{center} \fbox{\begin{minipage}{0.98\textwidth}
\underline{$(2,k)$-{\sc Connectivity Augmentation} ({\ENCA})} \\
{\em Input:} \ \ A $(k-1)$-edge-connected graph $G_0=(V,E_0)$ where $k \geq 2$, $Q \subs V$, and an edge set $E$ on $V$ with costs $\{c_e:e \in E\}$. \\
{\em Output:} A  minimum cost edge set $J \subs E$ such that $G_0 \cup J$ is $k$-edge-connected and has no cut-node in $Q$.
\end{minipage}} \end{center}

We will show that {\ENCA} (resp., with unit costs) can be reduced to the {\sc Node Weighted Steiner Tree} problem (resp., with unit weights)
with the following properties:
\begin{enumerate}[(A)]
\item
The neighbors of every terminal induce a clique.
\item
Every non-terminal has at most $2$ terminal neighbors. 
\item
There are no edges between the terminals. 
\end{enumerate}
{\sc Node Weighted Steiner Tree} can be solved in time $3^q \cdot n^{O(1)}$ by the Dreyfus-Wagner algorithm \cite{DW}. 
Byrka et al. \cite{BGA} showed that unit weight instances with properties (A,B,C) 
admit approximation ratio $1.91$,  and Angelidakis et al. \cite{ADS} improved the ratio to $1.892$.
Thus we get the following. 

\begin{theorem} \label{t:3}
{\ENCA} can be solved in time $9^p \cdot n^{O(1)}$, and in the case of unit costs admits approximation ratio $1.892$.
\end{theorem}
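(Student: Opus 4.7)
The plan is to follow the reduction paradigm of Basavaraju et al.~\cite{BFGM} and Nutov~\cite{N-2CND}, reducing {\ENCA} to an instance of {\sc Node Weighted Steiner Tree} (NWST) in which properties (A)--(C) hold. Property (B) forces any feasible covering of $q$ terminals by $p$ non-terminals to satisfy $q \le 2p$, so the Dreyfus--Wagner running time $3^q \cdot n^{O(1)}$ becomes $9^p \cdot n^{O(1)}$; and the $1.892$-approximation of Angelidakis et al.~\cite{ADS} for unit-weight NWST with (A)--(C) then directly yields the claimed approximation ratio for the unit-cost case.

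The first concrete step is to identify the obstructions that $J$ must cover. The $k$-edge-connectivity requirement amounts to covering the minimum $(k-1)$-edge-cuts of $G_0$, which by the cactus model of \cite{DKL} are represented by the edges of a cactus $H$ of size $O(n)$; after the standard contraction that turns each cycle of $H$ into a star of pendant terminal-edges, every $e \in E$ covers at most two of the resulting cactus-terminals. The ``no cut-node in $Q$'' requirement asks, for each cut-node $v \in Q$ of $G_0$, to link the blocks at $v$ by paths avoiding $v$; this is naturally encoded by a star $S_v$ whose leaves correspond to the blocks at $v$, and an edge $e \in E$ with endpoints in two distinct blocks at $v$ covers exactly two leaf-edges of $S_v$.

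The NWST instance then has terminal set $\TT$ consisting of the (contracted) cactus-edges of $H$ together with the leaf-edges of every star $S_v$ for $v \in Q$; for each $e \in E$ I would introduce a non-terminal $x_e$ of weight $c_e$, joined to exactly the terminals that $e$ covers. Property (C) is immediate; property (A) is enforced by adding zero-weight cliques among non-terminals that share a terminal (this preserves the Steiner-tree value); correctness --- that optimal Steiner trees on $\TT$ correspond to optimal feasible $J$ --- is the standard argument used in \cite{BFGM,N-2CND}. The main obstacle is enforcing property (B) \emph{globally}: a single edge $e \in E$ could a priori contribute both to a cactus-edge of $H$ and to a star-edge at some $v \in Q$, producing more than two terminal neighbours. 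The resolution is to glue each star $S_v$ into the cactus $H$ at the position occupied by $v$, so that the leaf-edges of $S_v$ \emph{refine} the local cactus-edges incident to $v$ rather than being added on top of them. With this careful gluing, the local ``at most two'' property of both structures combines correctly: every $e \in E$ covers at most two terminals of the combined skeleton, yielding $q \le 2p$ and the claimed bounds.
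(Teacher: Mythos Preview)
Your proposal takes a genuinely different route from the paper. The paper never builds an explicit cactus-plus-stars skeleton. Instead it observes that the obstructions to a feasible $J$ are exactly the \emph{tight bisets} $\A$ with $d_{G_0}(\A)+q(\partial\A)=k-1$ (where $q(v)=k-1$ for $v\in Q$ and $q(v)=\infty$ otherwise), proves this family is symmetric and crossing, fixes a root $s$ inside one core to obtain an \emph{intersecting} subfamily $\FF$ that suffices to cover, and then invokes the separability-graph reduction of Theorem~\ref{t:2}. Properties (A) and (B) are then short consequences of the separability relation: if links $uv$ and $xy$ are both $\FF$-inseparable from a core $\C$ then one endpoint of each lies in $C$, and since distinct cores have disjoint inner parts this simultaneously gives the clique-neighbourhood (A) and the degree bound (B). No gluing or case analysis is needed.

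Your proposal, by contrast, has real gaps. First, the ``standard contraction that turns each cycle of $H$ into a star of pendant terminal-edges'' is not standard and is not correct as stated: a cycle of length $\ell$ in the cactus encodes $\binom{\ell}{2}$ minimum cuts, and these are not captured by $\ell$ independent pendant edges; the reductions in \cite{BFGM,N-2CND} work via the separability/inseparability relation on links, not via such a contraction. Second, ``enforcing property (A) by adding zero-weight cliques among non-terminals that share a terminal'' is dangerous: adding edges between non-terminals can only \emph{decrease} the optimal {\SSCDS}/NWST value, so unless you prove that any two links sharing a terminal are already inseparable (which is precisely the content of property (A) in the paper's proof), the reduction may output an infeasible $J$. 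Third, the ``gluing'' of each star $S_v$ into the cactus at the position of $v$ is underspecified: under the cactus map $\psi$ several nodes of $G_0$ collapse to one cactus node, and the interaction between the block structure at $v\in Q$ and the local cactus edges at $\psi(v)$ is exactly the delicate point your argument needs but does not supply. The paper's biset formulation sidesteps all of this: once the tight-biset family is shown to be symmetric crossing, the machinery of Section~\ref{s:2} applies uniformly and (A),(B) fall out in a few lines.
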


In the {\CFC} problem we are
given a graph $G_0=(V,E)$ with edge costs and a symmetric crossing set family $\FF$ on $V$,
and  seek a minimum cost edge set $J \subs E$ that covers $\FF$. 
Theorem~\ref{t:3} was known for the {\CFC} problem \cite{N-2CND}.
We will show that {\ENCA} generalizes {\CFC}, see Section~\ref{s:3}.

Theorems \ref{t:2} and \ref{t:3} are proved in Sections  \ref{s:2} and \ref{s:3}, respectively. 

\medskip

In the rest of the Introduction we survey some additional related work. 
As was mentioned, min-cost connectivity problems that have solution size $\Omega(n)$ 
are trivially fixed parameter tractable w.r.t. an optimal solution size;
this is why our parameter choice is the number of non-zero cost edges in an optimal solution. 
Several other papers studied the so called ``deletion set'' parameter, i.e., 
the number of edges to be removed from the input graph in order to obtain a minimum cost solution.
For example, Bang-Jensen et al. \cite{BBKM} show that the {\sc $k$-Edge Connected Subgraph} problem is 
fixed parameter tractable for the combined parameter of $k$ and the size of a deletion set. 
Gutin et al. \cite{GRRW} show a similar result for {\sc $k$-Connected Subgraph} with unit costs.

In the more general {\sc Survivable Network Design} ({\sc SND}) problem we are given a graph with edge costs 
and pairwise connectivity requirements $\{r_{uv}:uv \in D\}$ over a set $D \subs V \times V$ of demand pairs. 
The goal is to find a min-cost subgraph that contains $r_{uv}$ internally disjoint paths for all $uv \in D$.
In the edge-connectivity version {\sc EC-SND} the paths are only required to be edge disjoint. 
{\sc EC-SND} admits a $2$-approximation algorithm \cite{Jain},
and can be solved in $2^{O(p \log p)} \cdot n^{O(1)}$ time \cite{FML}, where $p$ is the solution size. 
The status of the node-connectivity version {\sc SND} with $r_{uv} \in \{0,1,2\}$ is similar, see \cite{FJW} and \cite{FML}, respectively. 
On the other hand {\sc SND} parameterized by the solution size is W[1]-hard even when $|D|=2$.
{\sc SND} admits approximation ratio $O(k^3 \log n)$ for arbitrary requirements \cite{CK}, 
$O(k^2)$ for rooted requirements, and $O(k \log k)$ for rooted requirements in $\{0,k\}$ \cite{N-r},
where $k$ is the maximum requirement. 
See also a survey in \cite{N-snd}.
For the current status of {\sc SND} problem on directed graphs, see for example \cite{FM} and \cite{N-snd}.

\section{Covering intersecting biset families (Theorem~\ref{t:2})} \label{s:2}

We will reduce {\IBFC} to the following problem:

\begin{center} \fbox{\begin{minipage}{0.98\textwidth}
\underline{\sc Subset Steiner Connected Dominating Set} ({\SSCDS}) \\
{\em Input:} \ \ A  graph $H=(U,I)$, a set $R \subs U$ of terminals, and node weights $\{w_v:v \in U \sem R\}$.   \\
{\em Output:} A min-size node set $J \subs U \sem R$ such that $H[J]$ is connected and $J$ dominates $R$. 
\end{minipage}} \end{center}

As was observed in \cite{BN,N-2CND}, {\SSCDS} reduces to the {\sc Node Weighted} {\sc Group Steiner Tree} problem:
given a graph with node weights $w_v$ and a collection ${\cal S}$ of subsets (groups)
of the node set, find a min-weight subtree that contains a node from every group. 
Given a {\SSCDS} instance $(H,R,w)$ obtain an equivalent {\sc Node Weighted} {\sc Group Steiner Tree} instance  as follows:
for every $r \in R$, introduce a group $S_r$ that consists of the neighbors of $r$ in $H$, and then remove $r$. 
This problem reduces to the {\sc Directed Steiner Tree} problem with $|R|=|{\cal S}|$ terminals, 
that can be solved in time $3^q \cdot n^{O(1)}$, where $q=|R|=|{\cal S}|$.

In fact, we will consider the rooted version {\sc Rooted} {\SSCDS}, when we are also given a non-terminal root node 
$s \in U \sem R$ and we must have $s \in J$. 
All algorithms that we use, as well as hardness results, are applicable to {\sc Rooted} {\SSCDS}.
In the case when property (A) holds, we get the rooted version of the {\sc Node Weighted Steiner Tree} problem,
and then the algorithms of Dreyfus and Wagner \cite{DW} and of Angelidakis et al. \cite{ADS}
apply on (and were in fact designed for) the rooted version.

Let $\FF$ be an intersecting biset family and $J$ an edge set on $V$.
W.l.o.g. we may assume that there is a root node $s \in V \sem \left(\cup_{\A \in \FF} A^+\right)$.  
For $\A,\B \in \FF$ we say that {\bf $\B$ contains $\A$} and write $\A \subs \B$ if $A \subs B$ and $A^+ \subs B^+$.
An {\bf $\FF$-core} is an inclusion minimal biset in $\FF$. 
Let $\CC=\CC_\FF$ denote the set $\FF$-cores. 
For $\C \in \CC$ let $\FF(\C)=\{\A \in \FF: \C \subs \A\}$ denote 
the set of those members of $\FF$ that contain $\C$. 
Given an edge set $J$ and a node set $A$ we will write $J \subs A$ 
and say that $A$ contains $J$ meaning that the set of endnodes of $J$ is contained in $A$. 

Now we define a certain ``separability relation''  on $J \cup \CC \cup \{s\}$ 
and a ''separability graph'' that represents this relation. 
This follows a proof line of \cite{N-2CND}, where the problem of covering a 
symmetric crossing set family $\FF$ was considered. 
In \cite{N-2CND}, two edges $x,y$ are ``separated'' by a set $A$ 
if one of $x,y$ is contained in $A$ and the other in $V \sem A$,
but it is less clear how to extend this definition to bisets. 

\begin{definition} \label{d:separates}
We say that a biset {\bf $\A$ separates} $X \subs \CC \cup J$ from $Y \subs J \cup \{s\}$ 
if $X \cap Y \cap J=\empt$, every $\FF$-core in $X$ is contained in $\A$,
$J \cap X \subs A^+$, and $J \cap Y \subs V \sem A$. 
{\bf $X,Y$ are $\FF$-separable} if such $\A \in \FF$ exists, and $X,Y$ are {\bf $\FF$-inseparable} otherwise.
The {\bf separability graph} $H=(U,I)$ of $\FF,J$ has node set $U=\CC \cup J \cup \{s\}$ and edge set 
$I=\{xy: x \in \CC \cup J,y \in J \cup \{s\} \mbox{ are } \FF\mbox{-inseparable}\}$.
\end{definition}

Note that $s$ and any $\C \in \CC$ are $\FF$-separable, and that $\CC \cup \{s\}$ is an independent set in the separability graph $H$. 
We need the following technical lemma. 
 
\begin{lemma} \label{l:sep}
If $\A$ separates $X$ from $Z$ and $\B$ separates $Y$ from $Z$, 
then (see Fig.~\ref{f:ibg}(a,b)):
\begin{enumerate}[(i)]
\item
$\A \cup \B$ separates $X \cup Y$ from $Z$.
\item
If $A \cap B= \empt$ then $\A$ separates $X$ from $Y$.
\end{enumerate}
\end{lemma}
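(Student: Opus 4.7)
The plan is to unpack the four defining conditions of the ``separates'' relation from Definition~\ref{d:separates} and verify them in each case. Both parts reduce to routine algebra on bisets combined with the hypotheses, using the standard identities $(A\cup B)^+=A^+\cup B^+$ and $V\sem(A\cup B)=(V\sem A)\cap(V\sem B)$.

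For part~(i), I would show that the triple $(\A\cup\B,\,X\cup Y,\,Z)$ satisfies the four conditions of separation. The disjointness $(X\cup Y)\cap Z\cap J=\empt$ distributes over the union and follows directly from the two given separations. Any $\FF$-core $\C\in X\cup Y$ lies in $X$ or in $Y$, so $\C\subs\A$ or $\C\subs\B$, and in either case $\C\subs\A\cup\B$. For the edge condition, $J\cap(X\cup Y)=(J\cap X)\cup(J\cap Y)\subs A^+\cup B^+=(A\cup B)^+$. Finally, $J\cap Z$ is contained in both $V\sem A$ and $V\sem B$, hence in $V\sem(A\cup B)$. All four conditions then hold for $\A\cup\B$.

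For part~(ii), I would check that $(\A,\,X,\,Y)$ satisfies the four conditions. Two of them, namely the $\FF$-core containment in $\A$ and $J\cap X\subs A^+$, are inherited unchanged from the hypothesis that $\A$ separates $X$ from $Z$. The remaining two require the disjointness $A\cap B=\empt$: an edge $e\in J\cap Y$ has its endpoints in $B^+$ by $\B$ separating $Y$ from $Z$, and the disjointness together with a careful inspection of the boundary pieces places these endpoints in $V\sem A$, yielding $J\cap Y\subs V\sem A$. Symmetrically, an edge in $X\cap Y\cap J$ would be forced to lie in both $A^+$ and $B^+$, and disjointness of $A$ and $B$ plus the same boundary inspection gives a contradiction, so $X\cap Y\cap J=\empt$.

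The main obstacle I anticipate is precisely this boundary bookkeeping in part~(ii): because $A^+=A\cup\p\A$ and $B^+=B\cup\p\B$, the raw hypothesis $A\cap B=\empt$ does not on its own force $A\cap B^+=\empt$ or $A^+\cap B^+=\empt$. Closing this gap means reading off from Definition~\ref{d:separates} (and the geometric picture in Fig.~\ref{f:ibg}(a,b)) that under the two given separations the endpoints of edges in $J\cap Y$ and in a hypothetical $X\cap Y\cap J$ must actually land in the portions of $B^+$ that are disjoint from $A$. Once this is made explicit, both remaining conditions follow by direct substitution and the proof wraps up cleanly.
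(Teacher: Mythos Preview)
Your plan for part~(i) matches the paper's proof exactly: unfold the four clauses of Definition~\ref{d:separates} and verify them for $(\A\cup\B,\,X\cup Y,\,Z)$ using $(A\cup B)^+=A^+\cup B^+$ and $V\sem(A\cup B)=(V\sem A)\cap(V\sem B)$. Nothing more is needed there.

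For part~(ii) your approach again coincides with the paper's: the paper also takes the core-containment and the clause $J\cap X\subs A^+$ directly from ``$\A$ separates $X$ from $Z$'', and then asserts ``every edge in $Y$ is contained in $V\sem A$ (since $A\cap B=\empt$)'' with no further argument. So you have reproduced the paper's proof and, in addition, put your finger on the exact place where it is thin. The paper does not even mention the clause $X\cap Y\cap J=\empt$ that you worry about.

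However, your proposed way to close the gap does not work. From ``$\B$ separates $Y$ from $Z$'' the only information about edges in $Y$ is $J\cap Y\subs B^+$. The hypothesis $A\cap B=\empt$ says nothing about $\p\B\cap A$, and Definition~\ref{d:separates} imposes no relation whatsoever between $\p\B$ and $A$; there is nothing in the four clauses, nor in ``$\A$ separates $X$ from $Z$'', that would force an endpoint sitting in $\p\B$ to avoid $A$. Concretely, take $\A=(\{1\},\{1,2\})$, $\B=(\{3\},\{1,3\})$, $Z=\{s\}$ with $s=5$, $X=\{(1,2)\}$, $Y=\{(1,3)\}$: then $A\cap B=\empt$, $\A$ separates $X$ from $Z$, $\B$ separates $Y$ from $Z$, yet the edge $(1,3)\in J\cap Y$ has the endpoint $1\in A$, so $J\cap Y\not\subs V\sem A$. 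The same example blocks your argument for $X\cap Y\cap J=\empt$. So the ``careful inspection of the boundary pieces'' you allude to cannot be carried out from the stated hypotheses alone; you have correctly located a gap shared with the paper, but the resolution you sketch is not one.
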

\begin{proof}
By the assumption of the lemma we have (see Fig.~\ref{f:ibg}(a)):
\begin{itemize}
\item
Since $\A$ separates $X$ from $Z$, 
every $\FF$-core in $X$ is contained in $\A$, every edge in $X$ is contained in $A^+$, 
and every edge in $Z$ is contained in $V \sem A$. 
\item
Since $\B$ separates $Y$ from $Z$,
every $\FF$-core in $Y$ is contained in $\B$, every edge in $X$ is contained in $A^+$, and every 
edge in $Z$ is contained in $V \sem B$.
\end{itemize}
Consequently, 
every $\FF$-core in $X \cup Y$ is contained in $\A \cup \B$,
every edge in $X \cup Y$ is contained in $A^+ \cup B^+$, 
and every edge in $Z$ is contained in $V \sem (A \cup B)$.
This implies that $\A \cup \B$ separates $X \cup Y$ from $Z$.

Now suppose that $A \cap B= \empt$; see Fig.~\ref{f:ibg}(b).
To see that then $\A$ separates $X$ from $Y$, note that
every $\FF$-core in $X$ is contained in $\A$ and every edge in $X$ is contained in $A^+$ 
(since $\A$ separates $X$ from $Z$),
and that  every edge in $Y$ is contained in $V \sem A$ 
(since $A \cap B=\empt$).
\end{proof}

The following key lemma is the technical part of the reduction. 

\begin{lemma} \label{l:ts}
Let $H=(\CC \cup J \cup \{s\},I)$ be the separability graph 
of an intersecting biset family $\FF$ and an edge set $J$ on $V$. 
Let $\C \in \CC$. Then $J$ covers $\FF(\C)$ iff $H$ has a $\C s$-path. 
(Equivalently: $J$ does not cover some $\C \in \FF(\C)$ iff $H$ has no $\C s$-path.)
\end{lemma}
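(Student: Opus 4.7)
The plan is to prove both directions by contrapositive, following the general strategy of the set-family analog in \cite{N-2CND}.

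For $(\Leftarrow)$, assume some $\A \in \FF(\C)$ is not covered by $J$, and use $\A$ to partition $V(H)=\CC \cup J \cup \{s\}$ into classes $U_X \ni \C$ and $U_Y \ni s$ with no $H$-edge between them. Put $\C' \in \CC$ with $\C' \subs \A$ and $e \in J$ with $e \subs A^+$ into $U_X$, and the remaining elements together with $s$ into $U_Y$. This is well-defined: since $\A$ is not covered, every $e \in J$ lies entirely in $A^+$ or entirely in $V \sem A$; and by minimality of cores plus the intersecting property, any $\C'$ not contained in $\A$ satisfies $C' \cap A=\empt$ (otherwise $\A \cap \C' \in \FF$ would be a proper sub-biset of $\C'$). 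For any would-be $H$-edge $xy$ with $x \in U_X, y \in U_Y$, a direct case check on the types of $x,y$ shows that $\A$ itself $\FF$-separates $\{x\}$ from $\{y\}$, contradicting $xy \in I$; hence $\C$ and $s$ lie in distinct components.

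For $(\Rightarrow)$, assume $H$ has no $\C s$-path, let $X$ be the component of $\C$, and $Y=(J \cup \{s\})\sem X$. For each $y \in Y$ let $\FF_y=\{\A' \in \FF(\C) : y \subs V \sem A'\}$, with the condition vacuous when $y=s$. Since $\C$ and $y$ lie in distinct $H$-components they are $\FF$-separable, so $\FF_y \neq \empt$; and since any two members of $\FF_y$ have $C$ in their inner, the intersecting property of $\FF$ implies $\FF_y$ is closed under $\cap$ and $\cup$, so it has a unique inclusion-maximum element $\A_y$. I then set $\A:=\bigcap_{y \in Y} \A_y$: the $\A_y$ pairwise intersect in $C$ so $\A \in \FF(\C)$, and $y \subs V \sem A_y \subs V \sem A$ for each $y$. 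The key step is the following main claim: each $\A_y$ already separates $X \cap (\CC \cup J)$ from $\{y\}$, meaning $\C' \subs \A_y$ for every $\C' \in X \cap \CC$ and $e \subs A_y^+$ for every $e \in X \cap J$. Granting this, the X-side conditions are preserved by $\cap$, so no edge of $J$ covers $\A$ (edges in $X$ are contained in $A^+$, edges in $Y$ in $V \sem A$), contradicting that $J$ covers $\FF(\C)$.

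To prove the main claim I argue by induction on $H$-distance from $\C$: pick $x \in X \cap (\CC \cup J)$ failing the X-side condition at minimum distance $d$, fix a shortest path $\C=w_0,\ldots,w_d=x$ so that $w_{d-1}$ is on the X-side, and derive a contradiction from $w_{d-1}x \in I$. If $x$ is an edge with $x \subs V \sem A_y$, then $\A_y$ itself $\FF$-separates $\{w_{d-1}\}$ from $\{x\}$. If $x$ is an edge that covers $\A_y$, pick $\B \in \FF$ with $x \subs B^+$ and $y \subs V \sem B$ (available since $x,y$ lie in different $H$-components, hence are $\FF$-separable): when $A_y \cap B \neq \empt$, the union $\A_y \cup \B \in \FF_y$ strictly contains $\A_y$ since $x$'s endpoint in $V \sem A_y^+$ is absorbed via $B^+$, contradicting maximality of $\A_y$; when $A_y \cap B=\empt$, Lemma~\ref{l:sep}(ii) applied with $X=\{\C\}, Y=\{x\}, Z=\{y\}$ forces $x \subs V \sem A_y$, contradicting that $x$ covers $\A_y$. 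I anticipate the main obstacle to be the case $x=\C' \in X \cap \CC$, where $w_{d-1}$ must be an edge in $J$: applying the same maximality construction to $\C'$ produces a maximum $\B' \in \FF(\C')_y$, and inseparability of $w_{d-1}$ and $\C'$ combined with the edge-case argument applied to $\B'$ forces $w_{d-1} \subs (B')^+$, so the plus-sets of $\A_y$ and $\B'$ intersect at the endpoints of $w_{d-1}$; ruling out $A_y \cap B'=\empt$ so that $\A_y \cup \B' \in \FF_y$ strictly contains $\A_y$, and thus contradicts the maximality of $\A_y$, will be the technical heart of the argument and should follow from a further careful use of Lemma~\ref{l:sep}.
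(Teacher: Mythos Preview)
Your $(\Leftarrow)$ direction is the paper's: both use an uncovered $\A \in \FF(\C)$ to split $V(H)$ into the ``inside'' part $\{\C'\in\CC:\C'\subs\A\}\cup\{e\in J:e\subs A^+\}$ and its complement, with no $H$-edge crossing.

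For $(\Rightarrow)$ your route genuinely differs. The paper fixes an arbitrary $g \in (J \sem U_\C)\cup\{s\}$, orders $U_\C=\{f_0=\C,f_1,\ldots\}$ so that each $f_i$ has an $H$-neighbour $f_j$ with $j<i$, picks $\A_i\in\FF$ separating $f_i$ from $g$, and argues inductively via Lemma~\ref{l:sep} that the running \emph{union} $\bigcup_{i'\le i}\A_{i'}$ stays in $\FF$ and separates $\{f_0,\ldots,f_i\}$ from $g$; it then passes to an inclusion-\emph{minimal} $\A\in\FF$ separating $U_\C$ from $s$ and shows no edge covers it by intersecting with a separator of $U_\C$ from that edge. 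You instead take, per $y$, the \emph{maximal} $\A_y\in\FF_y$, prove it already separates all of $X$ from $y$, and then intersect over $y$. The two endgames are dual and both valid; your main claim is equivalent to the paper's ``some $\A\in\FF$ separates $U_\C$ from $g$''.

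Your acknowledged gap in Case~3 is real, and I do not believe ``a further careful use of Lemma~\ref{l:sep}'' closes it. You obtain $w_{d-1}\subs A_y^+$ (induction) and $w_{d-1}\subs (B')^+$ (your distance-$1$ argument at $\C'$), hence $A_y^+\cap (B')^+\neq\empt$; inseparability of $\C'$ and $w_{d-1}$ even upgrades this to an endpoint of $w_{d-1}$ lying in $B'$, so $A_y^+\cap B'\neq\empt$. But to invoke the intersecting property you need the \emph{inner} parts to meet, $A_y\cap B'\neq\empt$, and for bisets this does not follow: nothing prevents that endpoint from sitting in $\p\A_y\cap B'$. Lemma~\ref{l:sep}(ii) only places edges of the $Y$-side into $V\sem A$; it never pushes an endpoint into $A$ itself, which is what you are missing. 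The paper sidesteps your case split entirely by working with a single running union anchored at $\C$ and never comparing two maximal separators rooted at different cores; in its inductive step the needed intersection is argued once, uniformly, rather than via an edge-case/core-case dichotomy.
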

\begin{proof}
Suppose that $J$ does not cover some $\A \in \FF(\C)$. 
Let $\CC_\A$ be the set of $\FF$-cores contained in $\A$ and $J_\A$ the set of edges in $J$ contained in $A^+$.
Then $H$ has no edge between $\CC_\A \cup J_\A$ and $(\CC \cup J \cup \{s\}) \sem (\CC_\A \cup J_\A)$.
Consequently, $H$ has no $\C s$-path. 

\begin{figure}
\centering 
\includegraphics{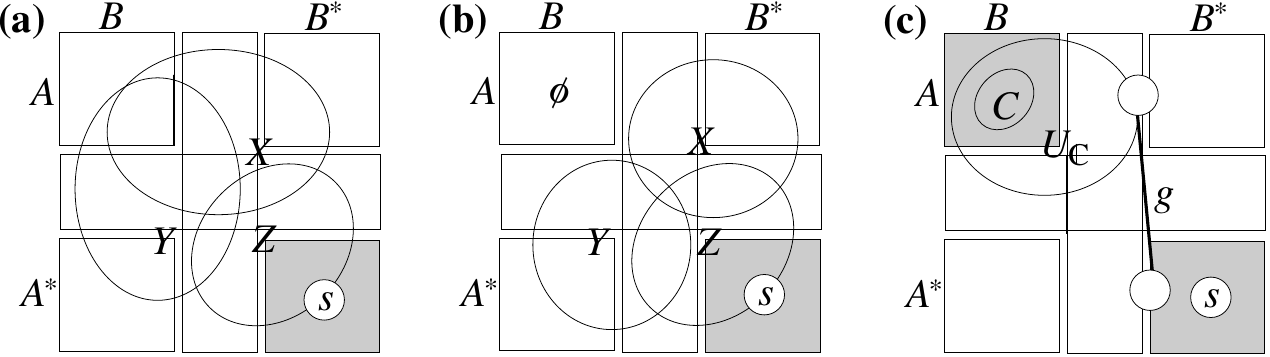}
\caption{Illustration to the proof of Lemmas \ref{l:sep} and \ref{l:ts}.
Dark parts are non-empty, empty parts are marked by $\empt$; 
other parts may or may not be empty.}
\label{f:ibg}
\end{figure}

Suppose that $H$ has no $\C s$-path. 
Let $U_\C$ be the set of nodes of the connected component of $H$ that contains $\C$,
and let $g \in (J \sem U_\C) \cup \{s\}$. 
We now will show that $\FF$ contains a biset that separates $U_\C$ from $g$.
Let $f_0,f_1,\ldots$ be an ordering of $U_\C$, where $f_0=\C$, 
such that in $H$ each $f_i$ with $i \geq 1$ is adjacent to some $f_j$ with $j <i$;
since $H[U_\C]$ is connected, such an ordering exists.
For any $f_i$ there is $\A_i \in \FF$ that separates $f_i$ from $g$.
By Lemma~\ref{l:sep} and since $\FF$ is an intersecting family, for  $X,Y \subs \CC \cup J$ and $Z \subs J \cup \{s\}$,
if $\A$ separates $X$ from $Z$ and $\B$ separates $Y$ from $Z$, 
then (see Fig.~\ref{f:ibg}(a,b)):
\begin{enumerate}[(i)]
\item
$\A \cup \B$ separates $X \cup Y$ from $Z$; moreover, if $\A,\B \in \FF$ and $A \cap B \neq \empt$ then $\A \cup \B \in \FF$. 
\item
If $A \cap B= \empt$ then $\A$ separates $X$ from $Y$.
\end{enumerate}
In particular, if $\A,\B \in \FF$ and $X,Y$ are $\FF$-inseparable, then $A \cap B \neq \empt$ must hold, 
and thus $\A \cup \B$ belongs to $\FF$ and separates $X \cup Y$ from $Z$.
Applying this on $X=\{f_0\},Y=\{f_1\},Z=\{g\}$ and $\A=\A_0,\B=\A_1$,
we get that $\A_1 \cup \A_2 \in \FF$ and separates $\{f_0,f_1\}$ from $g$. 
By an identical argument applied on $X=\{f_0,f_1\}$ and $Y=\{f_2\}$ 
we get that $(A_0 \cup A_1) \cup A_2 \in \FF$ separates $\{f_0,f_1,f_2\}$ from $g$.
By induction we get that $\cup_i A_i \in \FF$ and separates $U_\C$ from $g$.

Let $\A$ be an inclusion minimal biset in $\FF$ that separates $U_\C$ from $s$.
Note that $\A \in \FF(\C)$.
We claim that $J$ does not cover $\A$.
Suppose to the contrary that some $g \in J$ covers $\A$. 
Note that $g \in J \sem U_\C$, thus as we just proved, 
there is $\B \in \FF$ that separates $U_\C$ from $g$, see Fig.~\ref{f:ibg}(c).
Note that $\C \subs \A \cap \B$, hence $\A,\B$ intersect and thus $\A \cap \B \in \FF$.
Summarizing, we have:
\begin{itemize}
\item
$\A$ separates $U_\C$ from $s$ and 
$\B$ separates $U_\C$ from $g$.
\item
$\A \cap \B \neq \empt$ and thus $\A \cap \B \in \FF$.
\end{itemize}
By interchanging the roles of $\A,\B$ and $\A^*,\B^*$ in Lemma~\ref{l:sep},
we get that if $\A$ separates $Z$ from $X$ and $\B$ separates $Z$ from $Y$, then $\A \cap \B$ separates $Z$ from $X \cup Y$. 
Applying this on $Z=U_\C$, $X=\{s\}$, and $Y=\{g\}$, we get that $\A \cap \B$ separates $U_\C$ from $\{s,g\}$.
As $g$ has both ends in $V \sem B$ and covers $\A$, 
it has one end in $A \sem B$ and the other in $A^* \sem B$.
This implies that $\A \cap \B \subsetneq \A$ (namely, $\A \cap \B$ is strictly contained in $\A$). 
Since $\A \cap \B$ separates $U_\C$ from $g$ and $\A \cap \B \in \FF$,
we obtain a contradiction to the minimality of $\A$.  
\end{proof}

From Lemma~\ref{l:ts} we get:

\begin{corollary} \label{c:s}
An edge set $J$ covers an intersecting biset family $\FF$ iff the separability graph $H$ of $\FF,J$ has a subtree that contains $s$ 
and has leaf set $\CC$.  
\end{corollary}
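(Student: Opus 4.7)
The plan is to leverage Lemma~\ref{l:ts} and pivot through a single connectivity condition in the separability graph $H$.

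The key first step is the observation that $\CC$, being the set of inclusion-minimal members of the (finite) family $\FF$, has the property that every $\A \in \FF$ contains some $\C \in \CC$ and hence belongs to $\FF(\C)$. Consequently, $J$ covers $\FF$ if and only if $J$ covers $\FF(\C)$ for every $\C \in \CC$. By Lemma~\ref{l:ts}, the latter condition is equivalent to $H$ containing a $\C s$-path for every $\C \in \CC$, i.e., to all of $\CC \cup \{s\}$ lying in a single connected component of $H$. The corollary therefore reduces to the purely graph-theoretic equivalence: $\CC \cup \{s\}$ sits in one component of $H$ if and only if $H$ has a subtree that contains $s$ and has leaf set $\CC$.

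The ($\Leftarrow$) direction of this reduced equivalence is immediate, since any such subtree is a connected subgraph of $H$ containing every $\C \in \CC$ together with $s$. For the ($\Rightarrow$) direction, I would take an inclusion-minimal connected subgraph $T$ of $H$ that contains $\CC \cup \{s\}$; for instance, starting from the union of simple $\C s$-paths and deleting edges one by one while keeping $\CC \cup \{s\}$ connected. Such $T$ is necessarily a tree, and minimality forces every leaf of $T$ to lie in $\CC \cup \{s\}$, since any leaf outside this set could be deleted without affecting connectivity of the retained nodes.

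The main obstacle is then to arrange that the leaves of $T$ are precisely $\CC$, rather than merely a subset of $\CC \cup \{s\}$ that might exclude some $\C \in \CC$ (appearing internally) or include $s$. For this I would exploit that $\CC \cup \{s\}$ is an independent set in $H$, so the $T$-neighbors of any $\C \in \CC$ and of $s$ all lie in $J$; viewing $T$ as rooted at $s$, one can perform a standard tree reorganization that pushes any $\C \in \CC$ currently appearing as an internal node out to a pendant position via its already-present $J$-neighbors inside the subtree rooted at $\C$. The substantive content of the corollary is the connectivity equivalence supplied by Lemma~\ref{l:ts}; this final leaf-normalization is combinatorial bookkeeping enabled by the independence structure of $\CC \cup \{s\}$ in~$H$.
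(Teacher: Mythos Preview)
Your approach is correct and is essentially the paper's: the corollary is stated as an immediate consequence of Lemma~\ref{l:ts} with no further argument, and your reduction ``$J$ covers $\FF$ $\Leftrightarrow$ $J$ covers each $\FF(\C)$ $\Leftrightarrow$ every $\C\in\CC$ has a $\C s$-path in $H$'' is exactly the intended reasoning. You in fact supply more detail than the paper on the leaf-normalization step; the paper does not justify that the subtree can be taken with leaf set precisely $\CC$, treating this as routine given that $\CC\cup\{s\}$ is independent in $H$.
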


The reduction of {\IBFC} to {\sc Rooted} {\SSCDS} is as follows. 

\begin{definition} \label{d:red}
Given an {\IBFC} instance $(\FF,E,c)$,
the corresponding {\sc Rooted} {\SSCDS} instance $(H,R,w)$ is constructed as follows.
\begin{itemize}
\item
$H$ is the separability graph of $\FF,E$.
\item
$R=\CC$ and the root is $s$.
\item
For every $e \in E$, the weigh of the node $e$ in $H$ equals to the cost of the edge $e$ in $E$. 
\end{itemize}
\end{definition}

From Corollary~\ref{c:s} it follows that $J \subs E$ is a feasible solution to the obtained {\sc Rooted} {\SSCDS} instance
iff $J$ covers $\FF$. By the reduction, the weight of $J$ equals the cost of $J$.   
As was explained at the beginning of this section, {\sc Rooted} {\SSCDS} can be solved in time $3^q \cdot n^{O(1)}$, where $q=|R|$.
Since we need at least ${\sf opt} \geq |\CC_\FF|/2$ edges to cover $\FF$, and since $|R|=|\CC_\FF|+1$,   
we can find an optimal cover of $\FF$ in $3^q \cdot n^{O(1)}=9^p \cdot n^{O(1)}$ time. 

It remains to show that $H$ can be constructed in polynomial time.
For that is is sufficient to show that for any $x \in \CC \cup E$ and $y \in E \cup \{s\}$
we can check in polynomial time whether there is $\A \in \FF$ that separates $x$ from $y$.
Recall that we assume that for any edge set $I$ we can find in polynomial time 
the family $\FF^I_{\min}$ and $\FF^I_{\max}$ of all inclusion minimal and maximal bisets, respectively,  
among the bisets in $\FF$ not covered by $I$. 
From Definition~\ref{d:separates}, it is not hard to verify that 
there is a biset in $\FF$ that separates: 
\begin{itemize}
\item 
$\C \in \CC$ from $e \in E$ iff $e \subs V \sem C$.
\item 
$e \in E$ from $s$ iff there is $\A \in \FF^{\empt}_{\max}$ such that $e \subs A^+$.
\item 
$uv \in E$ from $u'v' \in E$ iff there is $\A \in \FF^I_{\max}$ with $e \subs A^+$ and $e' \subs V \sem A$, where $I=\{u's,v's\}$. 
\end{itemize}

This concludes the proof of Theorem~\ref{t:2}. 

\section{Reduction for  (2,{\em k})-Connectivity Augmentation  (Theorem~\ref{t:3})} \label{s:3}

Let $(G_0=(V,E_0),E,c,Q,k)$ be an instance of {\ENCA} with $k \geq 2$.
Recall that $G_0$ is $(k-1)$-edge-connected, and we seek a min cost edge set $J \subs E$ 
such that $G_0 \cup J$ is both $k$-edge-connected and has no cut node in $Q$.  
We will construct an equivalent instance $(H=(U,F),w,R)$ of {\sc Rooted} {\SSCDS} that satisfies the following two properties:
\begin{enumerate}[(A)]
\item
The neighbors of every terminal induce a clique.
\item
Every non-terminal has at most $2$ terminal neighbors. 
\end{enumerate}
It is easy to see that if property (A) holds, then any subtree of $H$ that contains $R$ 
can be converted into a subtree of the same cost with leaf set $R$.
Thus any {\SSCDS} instance that satisfies property (A)
is equivalent to the {\sc Node Weighted Steiner Tree} instance with the same graph, weights, and set of terminals. 

\begin{definition} \label{d:tight}
Given a $(k-1)$-connected graph $G_0=(V,E_0)$ with $k \geq 2$ and $Q \subs V$, 
we assign capacities to the nodes $q(v)=k-1$ if $v \in Q$ and $q(v)=\infty$ otherwise.
We say that a proper biset $\A$ is {\bf tight} if $d_{G_0}(\A)+q(\p\A)=k-1$ and 
denote by $\TT=\TT_{G_0}$ the family of tight bisets.
\end{definition}

Equivalently, a proper biset $\A$ is tight if either 
$\p\A=\empt$ and $d_{G_0}(\A)=k-1$, or
$\p\A$ is a single node in $Q$ and  $d_{G_0}(\A)=0$.
Namely, the family of tight bisets is a union $\TT=\AA \cup \BB$ of 
a set family $\AA$ and a biset family $\BB$ defined by
$$
\AA=\{A: d_{G_0}(A)=k-1\} \ \ \ \ \ \BB=\{\B: \p\B \mbox{ is a single node in } Q, d_{G_0}(\B)=0, B,B^* \neq \empt\} \ .
$$

Note that since $G_0$ is $(k-1)$-edge-connected, $d_{G_0}(\A)+q(\p\A) \geq k-1$ for every proper biset $\A$.
From Menger's Theorem it follows that $J$ is a feasible {\ENCA} solution iff $J$ covers the family 
$\TT=\{\A: A,A^* \neq \empt, d_{G_0}(\A)+q(\p\A)=k-1\}$ of tight  bisets.

We need some definitions. 
The {\bf co-biset} of a biset $\A$ is the biset $\A^*=(V \sem A^+,V \sem A)$. 
A biset family is symmetric if $\A^* \in \FF$ whenever $\A \in \FF$. 
Two bisets $\A,\B$ {\bf co-cross} if $A \sem B^+,B \sem A^+$ are both non-empty.

\begin{lemma} \label{l:FAB}
The family $\TT$ of tight bisets is symmetric and crossing,
and any $\A,\B \in \TT$ cross or co-cross. 
Consequently, $\C \subs \A$ or $\C \subs \A^*$ holds for any $\A \in \TT$ and a $\TT$-core $\C \in \CC_\TT$.  
\end{lemma}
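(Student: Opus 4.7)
The plan is to establish the three assertions in turn and then derive the corollary about $\TT$-cores.

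\emph{Symmetry.} For each $\A\in\TT$ one has $\p\A^*=\p\A$ and $d_{G_0}(\A^*)=d_{G_0}(\A)$, so tightness passes to $\A^*$; in particular $\A\in\AA\Rightarrow\A^*\in\AA$ and $\A\in\BB\Rightarrow\A^*\in\BB$.

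\emph{Crossing.} The key ingredient is the submodular inequality
$$
f(\A\cap\B)+f(\A\cup\B)\;\le\;f(\A)+f(\B),\qquad f(\A):=d_{G_0}(\A)+q(\p\A),
$$
obtained from the standard submodularity of $d_{G_0}$ on bisets together with a node-by-node check that the $q$-terms balance (using $\p(\A\cap\B),\p(\A\cup\B)\subs\p\A\cup\p\B$ and a case analysis on whether a node lies in one or both of $\p\A,\p\B$). Because $G_0$ is $(k-1)$-edge-connected and every node outside $Q$ has $q$-capacity $\infty$, one has $f(\A')\ge k-1$ for every proper biset $\A'$. When $\A,\B\in\TT$ cross, $\A\cap\B$ and $\A\cup\B$ are both proper, and the inequality squeezes $f$ to $k-1$ on each, so both lie in $\TT$.

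\emph{Cross or co-cross (the main obstacle).} Assume $\A,\B\in\TT$ neither cross nor co-cross. Distributing the negation yields four sub-cases, which the symmetry of $\TT$ together with swapping $\A\leftrightarrow\B$ collapses to the single configuration $A\subs\p\B$. Since $A\ne\empt$ and $|\p\B|\le 1$, this forces $\B\in\BB$ with $\p\B=\{v\}\subs Q$ and $A=\{v\}$. Now the graph structure enters: $d_{G_0}(\B)=0$ means $G_0$ has no edges between $B$ and $B^*$, so $(k-1)$-edge-connectivity forces $v$ to carry $\ge k-1$ edges into each of $B$ and $B^*$, whence $\deg_{G_0}(v)\ge 2(k-1)$. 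But tightness of $\A$ at the singleton $A=\{v\}$ forbids this: if $\A\in\AA$ then $\deg_{G_0}(v)=d_{G_0}(\{v\})=k-1$, and if $\A\in\BB$ with $\p\A=\{u\}$ then $d_{G_0}(\A)=0$ confines all $G_0$-neighbors of $v$ to the single node $u$, so $v$'s edges cannot reach both $B$ and $B^*$. Either way a contradiction. This case analysis is the delicate step and the only place where the proof invokes the $(k-1)$-edge-connectivity beyond the submodular machinery.

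\emph{Consequence.} Given $\A\in\TT$ and a $\TT$-core $\C$, the dichotomy above applies. If $\A,\C$ cross, the crossing property gives $\A\cap\C\in\TT$, and since $\A\cap\C\subs\C$ and $\C$ is inclusion-minimal in $\TT$ we must have $\A\cap\C=\C$, i.e., $\C\subs\A$. If $\A,\C$ co-cross, then $\A^*,\C$ cross (co-crossing of $(\A,\C)$ is equivalent to crossing of $(\A^*,\C)$), and since $\A^*\in\TT$ by symmetry, the same argument applied to $\A^*$ yields $\C\subs\A^*$.
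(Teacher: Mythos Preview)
Your proof is correct and follows essentially the same route as the paper: symmetry from $f(\A)=f(\A^*)$, the crossing property from submodularity of $f$, and the ``cross or co-cross'' claim by reducing (via the symmetries $\A\leftrightarrow\A^*$, $\B\leftrightarrow\B^*$, $\A\leftrightarrow\B$) to the configuration $A\subs\p\B$, hence $A=\p\B=\{v\}$, followed by a split on whether $\p\A=\empt$. Your handling of the two sub-cases is a mild repackaging of the paper's: in the $\p\A=\empt$ case your degree count $\deg_{G_0}(v)\ge 2(k-1)$ versus $\deg_{G_0}(v)=k-1$ is exactly the paper's inequality $d_{G_0}(B)+d_{G_0}(B\cup\{v\})\le d_{G_0}(\{v\})$ read backwards; in the $\p\A=\{u\}$ case you argue that $v$'s unique neighbor $u$ cannot lie in both $B$ and $B^*$, whereas the paper instead isolates $B^*$ from the rest of $G_0$---both are one-line connectivity contradictions. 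You also spell out the derivation of the core dichotomy $\C\subs\A$ or $\C\subs\A^*$ (via $\A\cap\C\in\TT$ or $\A^*\cap\C\in\TT$ and minimality of $\C$), which the paper leaves implicit; this is a welcome addition.
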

\begin{proof}
Define a biset function $f(\A)=d_{G_0}(\A)+q(\p\A)$.
Since $f(\A)=f(\A^*)$ for any biset $\A$, we get that $\TT$ is symmetric.  
We will show that $\TT$ is a crossing family. 
The functions $f(\A)$ is submodular, namely,  
$f(\A)+f(\B) \geq f(\A \cap \B)+f(\A \cup \B)$ holds for any two bisets $\A,\B$;
this is so since it is known that each of the functions $d_{G_0}(\A)$ and $q(\p\A)$ is submodular.
If $\A,\B$ cross then $\A \cap \B,\A \cup \B$ are both proper bisets and thus 
$f(\A \cap \B),f(\A \cup \B) \geq k-1$. This implies
\[
k-1+k-1=f(\A)+f(\B) \geq f(\A \cap \B)+f(\A \cup \B) \geq k-1+k-1 \ .
\] 
Thus equality holds everywhere, hence $f(\A \cap \B)=f(\A \cup \B)=k-1$.

\begin{figure}
\centering 
\includegraphics{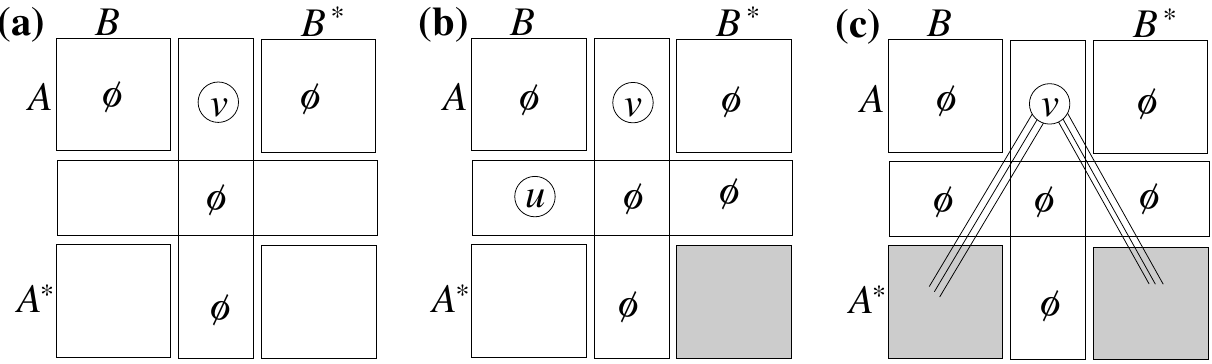}
\caption{Illustration to the proof of Lemma~\ref{l:FAB}. 
Dark parts are non-empty, empty parts are marked by $\empt$; 
other parts may or may not be empty.}
\label{f:ab}
\end{figure}

Now we show that any $\A,\B \in \TT$ cross or co-cross. 
Suppose to the contrary that there are $\A,\B \in \TT$ that do not cross nor co-cross. 
Then $A \subs \p\B$, or $A^* \subs \p\B$, or $B \subs \p\A$, or $B^* \subs \p\A$;
say $A \subs \p\B$; see Fig.~\ref{f:ab}(a). 
Then $A \cap \p\B \neq \empt$ since $A \neq \empt$, and $\p\B=\{v\}$ for some $v \in Q \cap A$ since $\B$ is tight. 

Suppose that $\p\A \neq \empt$, say $\p\A=\{u\}$  for some $u \in Q \cap B$; see Fig.~\ref{f:ab}(b). 
This implies $A^* \cap B^*=A^* \neq \empt$. Since $\A,\B$ are tight, $G_0$ has no edge between $A^* \cap B^*$ 
and $\{u,v\}$, contradicting that $G_0$ is connected.

Suppose that $\p\A = \empt$. 
Then since $\B$ is a proper biset, none of $A^* \cap B,A^* \cap B^*$ is empty; see Fig.~\ref{f:ab}(c). 
Since $\p\B \neq \empt $ and since $\B$ is tight, there is no edge between $B \cap A^*$ and $B^* \cap A^*$. 
This gives the contradiction
$k-1+k-1 \leq d_{G_0}(\B \sem \A)+d_{G_0}(\A \cup \B) = d_{G_0}(\A)=k-1$.
\end{proof}

\begin{lemma} \label{l:int}
Fix some $\TT$-core $\C_0$ and $s \in C_0$,
and let $\FF=\{\A \in \TT: s \in A^*\}$.  
Then $\FF$ is an intersecting biset family 
and $J$ is a feasible {\ENCA} solution iff $J$ covers $\FF$. 
\end{lemma}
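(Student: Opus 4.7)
The plan is to prove the two assertions separately, both by reducing them to the structural results already established in Lemma~\ref{l:FAB}.

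First, to show that $\FF$ is intersecting, I would take arbitrary $\A,\B \in \FF$ with $A \cap B \neq \empt$ and argue that they in fact cross. Since $\A,\B \in \FF$, by definition $s \in A^* = V \sem A^+$ and $s \in B^*=V \sem B^+$, so $s \notin A^+ \cup B^+$, witnessing $A^+ \cup B^+ \neq V$. Combined with $A \cap B \neq \empt$, this means $\A,\B$ cross in the sense of Section~\ref{s:2}. Lemma~\ref{l:FAB} then gives $\A \cap \B, \A \cup \B \in \TT$. To conclude they lie in $\FF$, I just check the position of $s$: $s \notin A^+ \cap B^+$ yields $s \in (\A \cap \B)^*$, and $s \notin A^+ \cup B^+$ yields $s \in (\A \cup \B)^*$.

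For the equivalence of feasibility, the forward direction is immediate: if $J$ is a feasible {\ENCA} solution then $J$ covers all of $\TT$, hence in particular $\FF \subs \TT$. The backward direction is the substantive one. Take any $\A \in \TT$; I must show $J$ covers $\A$. By Lemma~\ref{l:FAB}, $\C_0 \subs \A$ or $\C_0 \subs \A^*$. In the second case, $s \in C_0 \subs A^*$, so $\A \in \FF$ and $J$ covers $\A$ by assumption. In the first case, $s \in C_0 \subs A$, so $s \notin V \sem A = $ the co-set of $\A^*$... wait, I want $s$ to lie in the co-set of $\A^*$. The co-set of $\A^*=(V\sem A^+,V\sem A)$ is $V \sem (V \sem A)=A$, which does contain~$s$. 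Thus $\A^* \in \FF$ and $J$ covers $\A^*$. Finally, an edge $e$ covers $\A^*$ iff it goes from $V \sem A^+$ to $A$, which is the same condition as covering $\A$ (an edge from $A$ to $A^*=V \sem A^+$). So $J$ covers $\A$ in this case too.

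The main obstacle would have been the case analysis for $\A \in \TT$ with $s \in A$; without the ``$\C_0 \subs \A$ or $\C_0 \subs \A^*$'' dichotomy from Lemma~\ref{l:FAB}, one could not turn such an $\A$ into a biset in $\FF$. With that lemma in hand the argument is essentially bookkeeping, and the only subtlety is the (definitional) observation that the covering relation is invariant under taking co-bisets, which allows us to ``replace'' $\A$ by $\A^*$ whenever $s$ lands on the wrong side.
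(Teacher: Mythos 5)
Your proof is correct and follows essentially the same route as the paper: the feasibility equivalence is argued exactly as in the paper's proof, via the dichotomy ``$\C_0 \subs \A$ or $\C_0 \subs \A^*$'' from Lemma~\ref{l:FAB} together with the observation that covering $\A^*$ is the same as covering $\A$. You additionally spell out why $\FF$ is intersecting (any two intersecting members of $\FF$ in fact cross, since $s \notin A^+ \cup B^+$, so the crossing property of $\TT$ applies), a point the paper's proof leaves implicit; this is a completion of the same argument rather than a different approach.
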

\begin{proof}
To prove the lemma it is sufficient to show that if $J$ covers $\FF$ then $J$ covers $\TT$. 
Let $\A \in \TT$. By Lemma~\ref{l:FAB}, $\C_0 \subs \A$ or $\C_0 \subs \A^*$.
If $\C_0 \subs \A$ then $\A^* \in \FF$ and if $\C_0 \subs \A^*$ then $\A \in \FF$. 
Thus $J$ covers $\A$ or $\A^*$, which is equivalent to covering $\A$. 
\end{proof}

Let $\CC=\CC_\FF$ be the family of $\FF$-cores. 
Let $(H=(U,F),R,s,w)$ be a {\sc Rooted} {\SSCDS} instance as in Definition~\ref{d:red}, where 
$H$ is the separability graph of $\FF,E$, $R=\CC$, the root is $s$, and 
for every $e \in E$, the weigh of the node $e$ in $H$ equals to the cost of $e$.
By Lemma~\ref{l:FAB} and Corollary~\ref{c:s}, 
$G_0 \cup J$ is a feasible solution to {\ENCA} iff $H[J]$ is connected and $J$ dominates $R$ in $H$. 
The first part of Theorem~\ref{t:3} now follows from Lemma~\ref{l:int} and Theorem~\ref{t:2}.  

For the second part of Theorem~\ref{t:3} we will prove the following. 

\begin{lemma}
The {\SSCDS} instance $(H=(U,F),w,R)$ satisfies properties (A,B). 
\end{lemma}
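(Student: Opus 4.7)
My plan is to verify each property in turn. Property~(B) follows from the minimality of $\FF$-cores combined with the fact that $\FF$ is intersecting (Lemma~\ref{l:int}): if a node $u$ lies in the primary sets of two $\FF$-cores $\C_1,\C_2$, then $\C_1$ and $\C_2$ intersect, so $\C_1 \cap \C_2 \in \FF$ (it is proper since $s \in C_1^* \cap C_2^* \subs (\C_1 \cap \C_2)^*$), and minimality of both cores forces $\C_1 = \C_1 \cap \C_2 = \C_2$. Thus distinct $\FF$-cores have disjoint primary sets, and since a core $\C$ is a neighbor of an edge $e = uv$ in $H$ precisely when $u \in C$ or $v \in C$ (by the separability characterization at the end of Section~\ref{s:2}), every edge has at most two terminal neighbors. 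The non-terminal $s$ has none, since for every $\C \in \CC$ the biset $\C$ itself separates $\C$ from $s$: $\C \subs \C$ and $s \in C^* \subs V \sem C$.

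For property~(A), I would reduce it to the key claim that for every $\FF$-core $\C$ and every $\A \in \FF$, either $C \cap A^+ = \empt$ or $\C \subs \A$. Given this claim, suppose neighbors $e_1, e_2$ of $\C$ were separated by some $\A \in \FF$ with $e_1 \subs A^+$ and $e_2 \subs V \sem A$. The endpoint of $e_1$ in $C$ would witness $C \cap A^+ \neq \empt$, forcing $\C \subs \A$ and hence $C \subs A$; but then the endpoint of $e_2$ in $C$ would lie in $A$, contradicting $e_2 \subs V \sem A$. The easy half of the claim is when $C \cap A \neq \empt$: then $\C,\A$ intersect, so $\C \cap \A \in \FF$, and minimality of $\C$ gives $\C \cap \A = \C$, i.e., $\C \subs \A$.

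The main obstacle is the remaining case $C \cap A = \empt$ while $C \cap \p\A \neq \empt$, which forces $\p\A = \{v\}$ with $v \in Q$, $d_{G_0}(\A) = 0$, and $v \in C$. I plan to derive a contradiction by exhibiting a strictly smaller biset $\C' \in \FF$. Set $\C' = (C \sem \{v\}, C)$; when $|C| \geq 2$ this is a proper biset with $\p\C' = \{v\} \subs Q$, and $s \in (\C')^* = V \sem C$ follows from $s \in C^*$, so membership in $\FF$ reduces to $d_{G_0}(\C') = 0$. Since $C \sem \{v\} \subs A^*$, tightness of $\A$ kills all $G_0$-edges from $C \sem \{v\}$ to $A$, so only edges to $A^* \sem C$ could remain. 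If $\C$ is of type~I ($\p\C = \empt$, $d_{G_0}(C) = k-1$), then $(k-1)$-edge-connectivity applied to the cut $(A, V \sem A)$ (which routes entirely through $v$ by tightness of $\A$) forces all $k-1$ outgoing edges of $C$ onto $v$-to-$A$ edges, leaving zero for $A^* \sem C$. If $\C$ is of type~II with $\p\C = \{u\}$, the same cut rules out $u \in A^*$ (otherwise $v$-to-$A$ edges would be both required by connectivity and forbidden by $d_{G_0}(\C) = 0$), so $u \in A$; then $A^* \sem C \subs V \sem (C \cup \{u\})$ makes the desired edges forbidden by $d_{G_0}(\C) = 0$. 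The edge case $C = \{v\}$ is handled directly: the cuts $(A, V \sem A)$ and $(A^*, V \sem A^*)$ both pass only through $v$ and each demand $\geq k-1$ edges, which in type~I forces $\deg_{G_0}(v) \geq 2(k-1) > k-1$ against tightness of $\C$, and in type~II collapses since $v$'s only $G_0$-neighbor $u$ lies on exactly one side of $\A$, emptying the opposite cut.
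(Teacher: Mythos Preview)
Your proof is correct. For property~(B) you argue exactly as the paper does. For property~(A), however, you take a genuinely different route.

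The paper's argument for~(A) is a three-line application of Lemma~\ref{l:FAB}: since every $\FF$-core is also a $\TT$-core (any $\B\in\TT$ strictly inside an $\FF$-core $\C$ has $s\in C^*\subs B^*$, hence $\B\in\FF$), the dichotomy ``$\C\subs\A$ or $\C\subs\A^*$'' from Lemma~\ref{l:FAB} applies directly to any $\A\in\FF$ and any $\FF$-core $\C$. Then with $u,x\in C$ as endpoints of the two neighboring edges, both land in $A$ in the first case and in $A^*$ in the second, so neither edge can be separated from the other.

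You instead re-derive, by explicit case analysis on the two tight-biset types, the weaker claim ``$C\cap A^+=\empt$ or $\C\subs\A$'' (which is exactly the half of the dichotomy needed here). The delicate case $C\cap A=\empt$ but $v\in C\cap\p\A$ is handled by constructing the smaller biset $\C'=(C\sem\{v\},C)$ and verifying $d_{G_0}(\C')=0$ via edge-counting against the $(k-1)$-edge-connectivity lower bound on the cut through $v$. This works, including the corner case $C=\{v\}$, but it essentially reproves the relevant piece of Lemma~\ref{l:FAB} from scratch. The paper's approach is shorter and cleaner because it factors this structural fact out into Lemma~\ref{l:FAB} once; your approach is more self-contained and exposes the underlying edge-counting mechanism explicitly, at the cost of a substantially longer argument.
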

\begin{proof}
We prove property (A). Consider an $\FF$-core $\C$. 
Let $uv,xy \in E$ such that each of the pairs $uv,\C$ and $xy,\C$ is $\FF$-inseparable.
Then $\{u,v\} \cap C \neq \empt$ and $\{x,y\} \cap C \neq \empt$; say, $u,x \in C$.
Let $\A \in \FF_s$.
If $\C \subs \A$ then $u,x \in A$.
If $\C \subs \A^*$ then $u,x$ in $A^*$.
In both cases, $\A$ cannot separate one of $uv,xy$ from the other. 

Property (B) follows from the fact that if $\C$ and $uv$ are $\FF$-inseparable then $u \in C$ or $v \in C$, 
and since $C \cap C'=\empt$ holds for any two $\FF$-cores.
\end{proof}

This concludes the proof of Theorem~\ref{t:3}. 

\medskip

Now we will show that {\ENCA} generalizes the {\CFC} problem. 
Let $(\FF,E,c)$ be an instance of {\CFC}, where $\FF$ is a symmetric crossing set family on a groundset $U$, 
and $E$ is an edge set on with costs $\{c_e:e \in E\}$.  
We will show how to construct an equivalent $(2,3)$-{\CA} instance.
A {\bf cactus} is a $2$-edge-connected graph in which any two cycles have at most one node in common 
(equivalently: every block of the graph is a cycle).
Dinitz, Karzanov, and Lomonosov \cite{DKL} showed that the family $\FF$ of
mini\-mum edge cuts of a graph $G$ on node set $U$ can be represented by the family $\TT$ of minimum edge cuts of a cactus $G_0=(V,E_0)$
and a mapping $\psi:U \longrightarrow V$, such that $\FF=\{\psi^{-1}(A):A \in \TT\}$.
Dinitz and Nutov \cite[Theorem 4.2]{DN} (see also \cite[Theorem 2.7]{N-Th}) extended this representation 
by showing that an arbitrary symmetric crossing family $\FF$ can be represented 
by $2$-edge cuts and specified $1$-node cuts of a 
cactus.\footnote{A representation identical to the one of \cite{DN} was announced later by Fleiner and Jord\'{a}n \cite{FJ}.}
This representation can be stated as follows.

\begin{theorem} [\cite{DN}] \label{t:DN}
Let $\FF$ be a crossing family on a groundset $U$. 
Then there exists a cactus $G_0=(V,E_0)$, a mapping $\varphi:U \longrightarrow V$, 
and a set $Q$ of cut-nodes of $G_0$ with $\psi^{-1}(Q)=\empt$, such that 
$\FF=\{\varphi^{-1}(A): A \in \AA\} \cup \{\varphi^{-1}(B): \B \in \BB\}$, where 
$$
\AA=\{A: d_{G_0}(\A)=2\} \ \ \ \ \ \BB=\{\B: \p\A \mbox{ is a single node in } Q, d_{G_0}(\B)=0, B,B^* \neq \empt\} \ .
$$
Furthermore, if for any $A,B \in \FF$ the set $(A \sem B) \cup (B \sem A)$ is not in $\FF$, then $Q=\empt$.
\end{theorem}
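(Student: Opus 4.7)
The plan is to generalize the classical Dinitz--Karzanov--Lomonosov (DKL) cactus representation for symmetric crossing families of minimum edge cuts to an arbitrary symmetric crossing family, allowing designated ``1-node cuts'' through a set $Q$ of cut-nodes in addition to the usual 2-edge cuts.

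First I would normalize $\FF$ by identifying elements of $U$ that lie in exactly the same members of $\FF$: define $u \sim v$ iff for every $A \in \FF$ one has $u \in A$ if and only if $v \in A$, and let $\varphi_0 : U \to U/{\sim}$ be the quotient map. The mapping $\varphi$ of the theorem will factor through $\varphi_0$, and it suffices to build a cactus on a superset $V \supseteq U/{\sim}$. Next I would run a DKL-style uncrossing procedure: pick an inclusion-maximal proper $A \in \FF$, decompose $U/{\sim}$ according to $A$ versus its complement, and recurse on each side with the induced crossing sub-families. The members of $\FF$ that are pairwise non-crossing assemble into a laminar skeleton, while each ``crossing orbit'' of $\FF$ (an equivalence class of members connected by the crossing relation) is realized either by a cycle of the cactus whose 2-edge cuts are exactly the sides of that orbit (contributing to $\AA$), or by a degenerate orbit realized via a single new cut-node $q$ placed in $Q$, whose removal separates $G_0$ precisely along the sides listed in $\BB$.

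For the ``furthermore'' clause, the idea is that a cut-node $q \in Q$ is forced precisely when some orbit of $\FF$ contains a member of the form $(A \sem B) \cup (B \sem A)$ for crossing $A, B \in \FF$: such a member cannot be realized as a 2-edge cut on a cactus cycle that already contains $A$ and $B$, and must instead arise as a 1-node cut through a hinge. Hence under the assumption $(A \sem B) \cup (B \sem A) \notin \FF$ for all $A,B \in \FF$, every orbit reduces to the cycle case and one can take $Q = \empt$, recovering the original DKL construction.

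The main obstacle will be showing that the above construction is globally consistent: that the resulting $G_0$ is a genuine cactus (every block a cycle), that $\varphi^{-1}(Q) = \empt$ because the cut-nodes are ``virtual'' hinges rather than images of elements of $U$, and most importantly that the pullback along $\varphi$ of the family $\AA \cup \BB$ described in the theorem recovers exactly $\FF$. This verification relies on the submodularity of the cut function in conjunction with the crossing-family closure axioms, and on a careful orbit-by-orbit analysis to ensure every crossing pair is encoded by either a cycle or a cut-node, and never by both simultaneously.
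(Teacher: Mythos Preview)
The paper does not contain a proof of this statement: Theorem~\ref{t:DN} is quoted from Dinitz and Nutov~\cite{DN} (with a pointer also to \cite[Theorem~2.7]{N-Th}) and is used as a black box to argue that {\ENCA} generalizes {\CFC}. There is therefore nothing in the present paper against which to compare your proposal.

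As for the proposal itself, it is only a high-level plan and not yet a proof; you correctly identify the essential difficulty (global consistency of the cactus and the exact recovery $\FF=\varphi^{-1}(\AA)\cup\varphi^{-1}(\BB)$) but do not resolve it. Two specific points deserve care. First, the statement concerns an arbitrary crossing family, not the family of minimum cuts of some graph, so you cannot lean on the submodularity of a cut function as you suggest; the only structure available is the closure of $\FF$ under intersection and union of crossing pairs (together with symmetry, which the paper assumes in the surrounding discussion). Second, your explanation of the ``furthermore'' clause is imprecise: the relevant dichotomy in the literature is between \emph{circular} crossing orbits (where for crossing $A,B$ only $A\cap B$ and $A\cup B$ lie in $\FF$, representable by a cycle) and orbits where some symmetric difference $(A\sem B)\cup(B\sem A)$ also lies in $\FF$; only the latter force a hinge node in $Q$. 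Your sketch gestures at this but does not establish that every orbit falls into exactly one of these two types, which is the crux of the argument in~\cite{DN}.
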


Given a {\CFC} instance $(\FF,E,c)$ construct a $(2,3)$-{\CA} instance $(G_0=(V,E_0), E',Q,c')$ as follows.
\begin{itemize}
\item
$G_0$ and $Q$ (and $\varphi$) are as in Theorem~\ref{t:DN}.
\item
For every edge $e=uv$ in $E$ there is an edge $e'=\varphi(u)\varphi(v)$ in $E'$ of cost $c'(e')=c(e)$. 
\end{itemize}
Then $\TT=\AA \cup \BB$ is the family of tight sets of $G_0$, 
and $J \subs E$ covers $\FF$ iff $J' \subs E$ covers $\TT$,
where $J'=\{e':e \in J\}$.
Note that in the obtained  $(2,3)$-{\CA} instance no edge in $E'$ is incident to a node in $Q$,
while in general $(2,3)$-{\CA} instances such edge might exist. 
This suggests that $(2,3)$-{\CA} strictly generalizes the {\CFC} problem. 


\end{document}